\title{\LARGE \bf
A Predictive Framework for Adversarial Energy Depletion \\ in Inbound Threat Scenarios$^*$\thanks{$^*$Preprint submitted to the American Control Conference (ACC) 2026.}
}
\author{Tam W. Nguyen$^{1}$
\thanks{$^{1}$Tam W. Nguyen is with the Department of Electrical Engineering, Kyoto University, Kyoto 615-8510, Japan
        {\tt\small nguyen.tamwilly.3e@kyoto-u.ac.jp}}%
}
\newtheorem{lemma}{Lemma}
\begin{document}

\maketitle
\thispagestyle{empty}
\pagestyle{empty}

%%%%%%%%%%%%%%%%%%%%%%%%%%%%%%%%%%%%%%%%%%%%%%%%%%%%%%%%%%%%%%%%%%%%%%%%%%%%%%%%
\begin{abstract}

This paper presents a predictive framework for adversarial energy-depletion defense against a maneuverable inbound threat (IT). The IT solves a receding-horizon problem to minimize its own energy while reaching a high-value asset (HVA) and avoiding interceptors and static lethal zones modeled by Gaussian barriers. Expendable interceptors (EIs), coordinated by a central node (CN), maintain proximity to the HVA and patrol centers via radius-based tether costs, deny attack corridors by harassing and containing the IT, and commit to intercept only when a geometric feasibility test is confirmed. No explicit opponent-energy term is used, and the formulation is optimization-implementable. No simulations are included.

\end{abstract}

%%%%%%%%%%%%%%%%%%%%%%%%%%%%%%%%%%%%%%%%%%%%%%%%%%%%%%%%%%%%%%%%%%%%%%%%%%%%%%%%
\section{Introduction}

Adversarial target defense has been studied through pursuit-evasion dynamics and target-attacker-defender (TAD) frameworks. Prior work formulates optimal interception, defender assignment, and reach-avoid strategies under varied performance assumptions \cite{breakwell1975pursuit,eloy2017active,9147205,garcia2021complete}. Model predictive control (MPC) has been applied to one or both agents, typically to minimize miss distance or capture time \cite{1470179,doi:10.2514/1.50572,5756676}. Energy and fuel are treated as resource limits or self-minimizing costs \cite{doi:10.2514/6.2015-0337,7139438,bryson2018applied}. Other approaches, such as herding and containment, redirect threats without terminal intercept \cite{casbeer2018target}.

This paper proposes a predictive framework in which expendable interceptors (EIs) apply sustained pressure to deplete inbound threat (IT) energy through maneuver denial. EIs hold patrol formations and reposition to block attack vectors, engaging only when a geometric feasibility test confirms intercept viability. No explicit opponent-energy term is used; the structure instead exploits maneuver asymmetry and restricted access to induce exhaustion. To the author's knowledge, threat depletion without explicit capture has received limited attention in prior TAD literature.

Both agents solve simplified receding-horizon problems. The IT minimizes its own energy while penetrating toward the high-value asset (HVA). EIs minimize effort while remaining tethered to patrol centers and the HVA via soft positional penalties. When intercepts are viable, they are executed. Otherwise, EIs harass, shadow, and contain. The framework is compact, optimization-compatible, and conceptual. No simulation results are included.

This contribution is theoretical and exploratory. It outlines a control structure for threat exhaustion under asymmetric performance and constrained coverage. The formulation may support future work in autonomous defense, force design, and HVA protection.

\section{Preliminaries}

All $n$-dimensional real vectors $a\in\mathbb{R}^n$ are defined as column vectors.
Denote by $\mathbb{R}_{\ge0}$ the set of nonnegative real numbers, $\mathbb{R}_{>0}$ the set of strictly positive numbers, $\mathbb{N}$ the set of natural numbers without zero, and $\mathbb{N}$ the set of natural numbers including zero.

For conciseness, functions may be written with omitted arguments, i.e., $f(x_1,\ldots,x_n)$ becomes $f(\cdot)$.

All headings are measured counterclockwise from the $x$-axis; the $\mathrm{arctan2}$ function is defined by:
\begin{align}
    \mathrm{arctan2}(y,x) \coloneq 
    \begin{cases}
    \arctan\left(\dfrac{y}{x}\right), & \text{if } x > 0 \\
    \arctan\left(\dfrac{y}{x}\right) + \pi, & \text{if } x < 0,\, y \geq 0 \\
    \arctan\left(\dfrac{y}{x}\right) - \pi, & \text{if } x < 0,\, y < 0 \\
    +\dfrac{\pi}{2}, & \text{if } x = 0,\, y > 0 \\
    -\dfrac{\pi}{2}, & \text{if } x = 0,\, y < 0 \\
    \text{undefined}, & \text{if } x = 0,\, y = 0.
    \end{cases}
\end{align}

Denote by $t\in\mathcal{T}$ the absolute physical time, where $\mathcal{T}\coloneq\mathbb{R}_{\ge0}$ is the continuous physical time domain.

\section{Operational Scenario}

The system consists of three agents:
\begin{itemize}
    \item attacker: highly capable inbound threat (IT);
    \item defender: team of expendable interceptors (EIs);
    \item high-value asset (HVA): target to be defended.
\end{itemize}

The IT is a non-hovering, maneuverable strike asset committed to a one-way attack on the HVA. It seeks to:
\begin{itemize}
    \item reach the HVA;
    \item conserve maneuvering energy;
    \item minimize exposure to static defenses;
    \item evade interception by EIs.
\end{itemize}

The EIs are non-hovering, non-kinetic, expendable agents deployed to deny access to HVA.
Each is tasked to:
\begin{itemize}
    \item neutralize the IT via terminal intercept if feasible;
    \item otherwise, degrade the IT's maneuvering capability through sustained engagement, leading to mission failure.
\end{itemize}

The HVA is a fixed strategic asset located at $p_{\text{hva}} = \begin{bmatrix} x_{\text{hva}} & y_{\text{hva}}\end{bmatrix}^\top \in \mathbb{R}^2$.

\section{Threat Modeling}

For all $t\ge0$, the IT follows the dynamics:
\begin{align}
    \dot{x}(t) & = v(t)\cos(\theta(t)) \label{eq:IT1} \\
    \dot{y}(t) & = v(t)\sin(\theta(t)) \label{eq:IT2} \\
    \dot{v}(t) & = a(t) \label{eq:IT3} \\
    \dot{\theta}(t) & = \omega(t), \label{eq:IT4}
\end{align}
where:
\begin{itemize}
    \item $p(t) \coloneq \begin{bmatrix} x(t) & y(t) \end{bmatrix}^\top \in \mathbb{R}^2$: IT position;
    \item $v(t)\in[v_{\text{min}},v_{\text{max}}] \subset \mathbb{R}$: IT speed;
    \item $\theta(t) \in \mathbb{R}$: IT heading;
    \item $a(t) \in [a_{\text{min}}, a_{\text{max}}] \subset \mathbb{R}$: IT longitudinal acceleration;
    \item $\omega(t) \in [\omega_{\text{min}}, \omega_{\text{max}}] \subset \mathbb{R}$: IT turn rate.
\end{itemize}

The system enforces all constraints as hard limits.
The IT maintains forward motion at all times (no hovering or reversal permitted):
\begin{align}
    v_{\text{min}} \in \mathbb{R}_{>0}.
\end{align}

For all $t\ge0$, the IT's remaining maneuvering energy $e(t)$ is governed by:
\begin{align}
    \dot{e}(t) = -\left(a(t)^2 + \lambda\omega(t)^2\right), \quad e(t) \in \mathbb{R}_{\ge0}, \label{eq:e}
\end{align}
where:
\begin{itemize}
    \item $\dot{e}(t)$: IT instantaneous rate of energy depletion;
    \item $\lambda \in \mathbb{R}_{>0}$: IT penalty of turning relative to forward acceleration.
\end{itemize}

Let $\hat{t}\in\hat{\mathcal{T}}\subset \mathcal{T}$ denote the IT's current engagement time and $\hat{\tau}\in\hat{\mathcal{T}}$ its tactical horizon.
The precise sampling and time-indexing structure is detailed in Section~\ref{sec:it_planner}.

At each time $\hat{t}$, the IT minimizes the composite cost over the engagement window $[\hat{t},\hat{t}+\hat{\tau}]$:
\begin{align}
    \mathcal{J}(\hat{t}) \coloneq m_1 E(\hat{t}) + m_2 R(\hat{t}) + m_3 \bar{D}(\hat{t}),
\end{align}
where:
\begin{itemize}
    \item $E(\hat{t}) \coloneq -\int_{\hat{t}}^{\hat{t}+\hat{\tau}} \dot{e}(t)dt$: IT cumulative maneuvering energy expended over the engagement window;
    \item $R(\hat{t}) \coloneq \int_{\hat{t}}^{\hat{t}+\hat{\tau}} \rho(t)dt$: IT cumulative risk from exposure to static defenses and proximity to EIs;
    \item $\rho(t)\in\mathbb{R}_{>0}$: instantaneous risk density;
    \item $\bar{D}(\hat{t}) \coloneq \|p(\hat{t}+\hat{\tau}) - p_{\text{hva}}\|^2$: IT squared terminal distance to HVA;
    \item $m_j \in \mathbb{R}_{>0}$: IT mission weights, with $j\in\{1,2,3\}$.
\end{itemize}

Implementation details are provided in Section~\ref{sec:it_planner}; $\rho(t)$ is defined in Section~\ref{sec:it_risk}.

\section{Interceptor Modeling}

The defender deploys $n$ EIs under a centralized command node (CN).
For all $t\ge0$, each EI, indexed $i\in\mathcal{I}\coloneq\{1,\ldots,n\}$, follows:
\begin{align}
    \dot{x}^i(t) &= v^i(t) \cos(\theta^i(t)) \label{eq:xidot} \\
    \dot{y}^i(t) &= v^i(t) \sin(\theta^i(t)) \label{eq:yidot} \\
    \dot{v}^i(t) &= a^i(t) \label{eq:vidot} \\
    \dot{\theta}^i(t) &= \omega^i(t),  \label{eq:thetaidot}
\end{align}
where:
\begin{itemize}
    \item $p^i(t) \coloneq \begin{bmatrix} x^i(t) & y^i(t) \end{bmatrix}^\top \in \mathbb{R}^2$: EI $i$ position;
    \item $v^i(t) \in [\underline{v},\overline{v}] \subset \mathbb{R}$: EI $i$ speed;
    \item $\theta^i(t) \in \mathbb{R}$: EI $i$ heading;
    \item $a^i(t) \in [\underline{a}, \overline{a}] \subset \mathbb{R}$: EI $i$ longitudinal acceleration;
    \item $\omega^i(t) \in [\underline{\omega},\overline{\omega}]\subset \mathbb{R}$: EI $i$ turn rate.
\end{itemize}

The system enforces all constraints as hard limits.
Each EI maintains forward motion at all times:
\begin{align}
    \underline{v} \in \mathbb{R}_{>0}.
\end{align}

For all $t\ge0$, the EI $i$'s remaining maneuvering energy $e^i(t)$ is governed by:
\begin{align}
    \dot{e}^i(t) = -\left(\{a^i(t)\}^2 + \kappa\{\omega^i(t)\}^2\right), \quad e^i(t) \in \mathbb{R}_{\ge0}, \label{eq:eidot}
\end{align}
where:
\begin{itemize}
    \item $\dot{e}^i(t)$: EI $i$ instantaneous rate of energy depletion;
    \item $\kappa \in \mathbb{R}_{>0}$: EI penalty of turning relative to forward acceleration.
\end{itemize}

All EIs share a synchronized engagement time $\tilde{t}\in\tilde{\mathcal{T}}\subset\mathcal{T}$ and a tactical horizon $\tilde{\tau}\in\tilde{\mathcal{T}}$, imposed by the CN.
The precise sampling and time-indexing structure is detailed in Section~\ref{sec:ei_planner}.

% At each time $\tilde{t}$, the CN identifies a candidate set $\mathcal{N}_k$ within terminal intercept range.
% %
% For each $i\in\mathcal{N}_k$, predicts trajectories of candidate EIs within terminal intercept range to evaluate intercept conditions.

At each time $\tilde{t}$, the CN identifies a candidate set $\mathcal{N}(\tilde{t})$ of EIs within terminal intercept range.
For each $i\in\mathcal{N}(\tilde{t})$, the CN evaluates terminal intercept feasibility.
If feasible, terminal engagement is executed; otherwise the EI is excluded from $\mathcal{N}(\tilde{t})$ and reassigned to pursuit proximity.
Implementation of the terminal planner is detailed in Section~\ref{sec:terminal_planner}.

Next, the CN minimizes the pursuit proximity cost over the engagement window $[\tilde{t},\tilde{t}+\tilde{\tau}]$:
\begin{align}
    \mathscr{J}(\tilde{t}) \coloneq \sum_{i\in\mathcal{P}(\tilde{t})}\left(\mu_1 E^i(\tilde{t}) + \mu_2 B^i(\tilde{t}) + \mu_3 P^i(\tilde{t})\right),
\end{align}
where:
\begin{itemize}
    \item $E^i(\tilde{t}) \coloneq -\int_{\tilde{t}}^{\tilde{t}+\tilde{\tau}} \dot{e}^i(t) dt$: EI $i$ cumulative maneuvering energy expended over the engagement window;
    \item $B^i(\tilde{t}) \in\mathbb{R}_{>0}$: EI $i$ barrier term penalizing deviation from patrol zone and HVA defense perimeter;
    \item $P^i(\tilde{t})\coloneq\int_{\tilde{t}}^{\tilde{t}+\tilde{\tau}}\{\delta^i(t)\}^2dt$:
    EI $i$ cumulative squared proximity cost;
    \item $\delta^i(t) \in \mathbb{R}_{\ge0}$: instantaneous proximity distance between IT and EI $i$;
    \item $\mu_j \in \mathbb{R}_{>0}$: EI mission weights, with $j\in\{1,2,3\}$;
    \item $\mathcal{P}(\tilde{t}) \coloneq \mathcal{I}\setminus\mathcal{N}(\tilde{t})$: pursuit proximity set; $\mathcal{N}(\tilde{t})$ is updated following the terminal planner execution.
\end{itemize}

$\delta^i(t)$ is defined in Section~\ref{sec:terminal_planner} and the implementation of the pursuit proximity planner is detailed in Section~\ref{sec:pursuit_planner}; $B^i(\tilde{t})$ is defined in Section~\ref{sec:ei_barrier}.

\section{Engagement Outcomes}

Let $r_{\text{dz}} \in \mathbb{R}_{>0}$ denote the IT terminal dive zone (DZ) radius.  
Let $r_{\text{iz}} \in \mathbb{R}_{>0}$ denote the terminal intercept zone (IZ) radius.  
An EI $i$ achieves terminal intercept if $\|p - p^i\| \le r_{\text{iz}}$.

The IT is neutralized if intercepted within the IZ or fully depleted before entering the DZ.

EI terminal detonation is triggered autonomously via continuous-time proximity logic upon IZ entry, independent of CN control cycles.
IT impact is registered via continuous-time DZ boundary monitoring, independent of trajectory sampling.

Table~\ref{tab:outcome} summarizes the engagement outcomes.

\begin{table}[!ht]
\centering
\caption{Engagement Outcomes}\label{tab:outcome}
\begin{tabularx}{\linewidth}{|c|X|X|}
\hline
\textbf{Agent} & \textbf{Success} & \textbf{Failure} \\
\hline
\textbf{IT} & 
$ \|p - p_{\text{hva}}\| \le r_{\text{dz}}$ & 
$ e = 0 $ or
intercepted
\\
\hline
\textbf{EI} & 
IT neutralized before $\|p-p_{\text{hva}}\|\le r_{\text{dz}}$ & 
IT reaches DZ \\
\hline
\textbf{HVA} & 
IT denied access to DZ & 
IT reaches DZ; HVA lost \\
\hline
\end{tabularx}
\end{table}

\section{Predictive Engagement Framework}\label{sec:predict}

\subsection{IT Planner}\label{sec:it_planner}

The IT defines a discrete-time decision grid over the continuous time axis $\mathcal{T}$, with step size $\hat{T}_\mathrm{s}\in\mathbb{R}_{>0},$ yielding decision points at $\hat{t}_k = k \hat{T}_\mathrm{s}$, $k \in \mathbb{N}_0$.
The resulting discrete-time domain is $\hat{\mathcal{T}} \coloneq \{\hat{t}_0, \hat{t}_1,\ldots\}\subset \mathcal{T}$.

% The IT dynamics \eqref{eq:IT1}--\eqref{eq:IT4} and \eqref{eq:e} are discretized with step size $\hat{T}_\mathrm{s} \in \mathbb{R}_{>0}$, yielding decision points at instants $\hat{t}_k = k \hat{T}_\mathrm{s}$, $k \in \mathbb{N}_0$, where $\hat{t}_k\in\hat{\mathcal{T}}$.
% %
% The IT discrete-time domain is defined as $\hat{\mathcal{T}} \coloneq \{\hat{t}_0, \hat{t}_1, \ldots\}$.

Applying forward Euler integration to \eqref{eq:IT1}--\eqref{eq:IT4} and \eqref{eq:e} with step size $\hat{T}_\mathrm{s}$ yields:
\begin{align}
    s_{k+1} &= s_k + \hat{T}_\mathrm{s} f(s_k,u_k) \label{eq:sdis} \\
    e_{k+1} &= e_k - \hat{T}_\mathrm{s} \left(a_k^2 + \lambda \omega_k^2\right), \label{eq:edis}
\end{align}
where:
\begin{itemize}
    \item $s_k \coloneq \begin{bmatrix}p_k^\top & v_k & \theta_k\end{bmatrix}^\top \in \mathbb{R}^4$: IT state at step $k$;
    \item $p_k = \begin{bmatrix}x_k & y_k\end{bmatrix}^\top\in\mathbb{R}^2$: IT position at step $k$;
    \item $u_k \coloneq \begin{bmatrix} a_k & \omega_k \end{bmatrix}^\top\in\mathbb{R}^2$: IT control, held constant over the interval $[\hat{t}_k,\hat{t}_{k+1})$;
    \item $e_k \in \mathbb{R}_{\ge0}$: IT remaining maneuvering energy at step $k$;
    \item $f(\cdot) \coloneq \begin{bmatrix} v_k \cos\theta_k & v_k \sin\theta_k & a_k & \omega_k \end{bmatrix}^\top \in \mathbb{R}^4$.
\end{itemize}

For all $k\ge0$, the IT constraints are:
\begin{align}
    s_k \in \mathcal{S}; \quad u_k \in \mathcal{U}; \quad e_k \in \mathcal{E}, \label{eq:itcon}
\end{align}
where:
\begin{itemize}
    \item $\mathcal{S} \coloneq \mathbb{R}^2\times[v_\text{min},v_\text{max}]\times\mathbb{R}$: IT state constraint set;
    \item $\mathcal{U} \coloneq [a_\text{min},a_\text{max}]\times[\omega_\text{min},\omega_\text{max}]$: IT control constraint set;
    \item $\mathcal{E} \coloneq \mathbb{R}_{\ge0}$: IT energy constraint set.
\end{itemize}

For all $k\ge0$, define the discrete cost:
\begin{align}
    \mathcal{J}_k \coloneq & \sum_{i=0}^{h_{\hat{\tau}}-1} \left(\ell(s_{i|k},u_{i|k}) + \nu \|\sigma_{i|k}\|^2\right)\hat{T}_\mathrm{s} \nonumber \\ & + \phi(s_{h_{\hat{\tau}}|k}) + \nu \|\tilde{\sigma}_{h_{\hat{\tau}}|k}\|^2,
\end{align}
where:
\begin{itemize}
    \item $h_{\hat{\tau}} \in \mathbb{N}$: IT tactical horizon, with $h_{\hat{\tau}}\hat{T}_\mathrm{s} = \hat{\tau}$;
    \item $s_{i|k} \in \mathbb{R}^4$: IT state at step $k+i$, predicted at step $k$;
    \item $u_{i|k} \coloneq \begin{bmatrix}
        a_{i|k} & \omega_{i|k}
    \end{bmatrix}^\top \in \mathbb{R}^2$: IT computed control, held over the interval $[\hat{t}_{k+i},\hat{t}_{k+i+1})$;
    \item $\ell(\cdot) \coloneq m_1 \left(a_{i|k}^2 + \lambda\omega_{i|k}^2\right) + m_2 \rho_{i|k}$: IT stage cost;
    \item $\rho_{i|k} \in \mathbb{R}_{>0}$: IT instantaneous risk density at step $k+i$, predicted at step $k$; defined in \eqref{eq:rho_dis};
    \item $\phi(\cdot) \coloneq m_3 \|p_{h_{\hat{\tau}}|k}-p_{\text{hva}}\|^2$: IT squared terminal distance-to-HVA cost;
    \item $\nu \in \mathbb{R}_{>0}$: IT slack weight;
    \item $\sigma_{i|k} \coloneq \begin{bmatrix}(\sigma^s_{i|k})^\top & (\sigma^u_{i|k})^\top\end{bmatrix}^\top \in \mathbb{R}^6$: IT stage slack at step $k+i$, predicted at step $k$, with $\sigma^s_{i|k}\in\mathbb{R}^4$ and $\sigma^u_{i|k}\in\mathbb{R}^2$;
    \item $\tilde{\sigma}_{h_{\hat{\tau}}|k} \in \mathbb{R}^4$: IT terminal slack predicted at step $k$.
\end{itemize}

At each step $k$, the IT solves the optimization problem:
\begin{align}
    \min_{\{u_{i|k},\sigma_{i|k}\}_{i=0}^{h_{\hat{\tau}}-1},\; \tilde{\sigma}_{h_{\hat{\tau}}|k}} \quad & \mathcal{J}_k \nonumber \\
    \text{subject to} \quad & \text{\eqref{eq:sdis}--\eqref{eq:edis}}; \nonumber \\
    & s_{i|k} \in \mathcal{S} \oplus \sigma^s_{i|k} \nonumber\\
    & u_{i|k} \in \mathcal{U} \oplus \sigma^u_{i|k} \nonumber \\ 
    & \sigma_{i|k} \ge 0, \nonumber \\ & \forall i \in \{0, \ldots, h_{\hat{\tau}}-1\}; \nonumber \\
    & s_{h_{\hat{\tau}}|k} \in \mathcal{S} \oplus \tilde{\sigma}_{h_{\hat{\tau}}|k} \nonumber \\ 
    & e_{h_{\hat{\tau}}|k} \in \mathcal{E} \nonumber \\
    & \tilde{\sigma}_{h_{\hat{\tau}}|k} \ge 0,
\end{align}
where:
\begin{itemize}
    \item $e_{h_{\hat{\tau}}|k} \in \mathbb{R}_{\ge0}$: IT terminal remaining maneuvering energy predicted at step $k$.
\end{itemize}

Inequalities are enforced componentwise.
Slack augments bounds elementwise, with terminal slack applied to state only.
Constraint violations incur penalties and denote planning failures.

The IT applies $u_{0|k}$ over the interval $[\hat{t}_k, \hat{t}_{k+1})$.

\subsection{EI Planner} \label{sec:ei_planner}

The EI defines a discrete-time decision grid over the continuous time axis $\mathcal{T}$, with step size $\tilde{T}_\mathrm{s}\in\mathbb{R}_{>0},$ yielding decision points at $\tilde{t}_k = k \tilde{T}_\mathrm{s}$, $k \in \mathbb{N}_0$.
The resulting discrete-time domain is $\tilde{\mathcal{T}} \coloneq \{\tilde{t}_0, \tilde{t}_1,\ldots\}\subset \mathcal{T}$.

% The EI dynamics \eqref{eq:xidot}--\eqref{eq:thetaidot} and \eqref{eq:eidot} are discretized by the CN with step size $\tilde{T}_\mathrm{s} \in \mathbb{R}_{>0}$, yielding decision points at instants $\tilde{t}_k = k \tilde{T}_\mathrm{s}$, $k \in \mathbb{N}_0$, where $\tilde{t}_k \in \tilde{\mathcal{T}}$.
% %
% The EI discrete-time domain is defined as $\tilde{\mathcal{T}}\coloneq\{\tilde{t}_0,\tilde{t}_1,\ldots\}$.

For each $i\in\mathcal{I}$, applying forward Euler integration to \eqref{eq:xidot}--\eqref{eq:thetaidot} and \eqref{eq:eidot} with step size $\tilde{T}_\mathrm{s}$ yields:
\begin{align}
    s^i_{k+1} &= s^i_k + \tilde{T}_\mathrm{s} f^i(s^i_k,u^i_k) \label{eq:sidis} \\
    e^i_{k+1} &= e^i_k - \tilde{T}_\mathrm{s} \left((a^i_k)^2 + \kappa (\omega^i_k)^2\right), \label{eq:eidis}
\end{align}
where:
\begin{itemize}
    \item $s^i_k \coloneq \begin{bmatrix}(p^i_k)^\top & v^i_k & \theta^i_k\end{bmatrix}^\top \in \mathbb{R}^4$: EI $i$ state at step $k$;
    \item $p^i_k = \begin{bmatrix}x^i_k & y^i_k\end{bmatrix}^\top\in\mathbb{R}^2$: EI $i$ position at step $k$;
    \item $u^i_k \coloneq \begin{bmatrix} a^i_k & \omega^i_k \end{bmatrix}^\top \in \mathbb{R}^2$: EI $i$ control, held constant over the interval $[\tilde{t}_k,\tilde{t}_{k+1})$;
    \item $e^i_k \in \mathbb{R}_{\ge0}$: EI $i$ remaining maneuvering energy at step $k$;
    \item $f^i(\cdot) \coloneq \begin{bmatrix} v^i_k \cos\theta^i_k & v^i_k \sin\theta^i_k & a^i_k & \omega^i_k \end{bmatrix}^\top \in \mathbb{R}^4$.
\end{itemize}

For all $k\ge0$ and $i\in\mathcal{I}$, the EI $i$ constraints are:
\begin{align}
    s^i_k \in \mathscr{S}; \quad
    u^i_k \in \mathscr{U}; \quad
    e^i_k \in \mathscr{E}, \label{eq:eicon}
\end{align}
where:
\begin{itemize}
    \item $\mathscr{S} \coloneq \mathbb{R}^2\times[\underline{v},\overline{v}]\times\mathbb{R}$: EI state constraint set;
    \item $\mathscr{U} \coloneq [\underline{a},\overline{a}]\times[\underline{\omega},\overline{\omega}]$: EI control constraint set;
    \item $\mathscr{E} \coloneq \mathbb{R}_{\ge0}$: EI energy constraint set.
\end{itemize}

\subsubsection{Terminal Intercept Planner} \label{sec:terminal_planner}

At each step $k$, the CN identifies a candidate set $\mathcal{N}_k\coloneq \left\{i \in \mathcal{I}: \|p^i_k-p_k\|\le r_{\text{pz}} \right\}$ of EIs within the IT's proximity zone of radius $r_{\text{pz}}$.
If $\mathcal{N}_k = \emptyset$, the CN aborts terminal intercept planning and reverts to the pursuit proximity planner.

The CN models the IT as fully committed to terminal attack against the HVA.
At each sampling time $\tilde{t}_k$, the CN simulates the IT's future trajectory for all $t \ge \tilde{t}_k$, assuming nominal attack velocity and a constant heading $\theta^{\text{trm}}_k$ computed at $\tilde{t}_k$.
The dynamics are:
\begin{align}
    \dot{\tilde{p}}(t) &= v_{\text{atk}} \begin{bmatrix}
        \cos \theta^{\text{trm}}_k \\
        \sin \theta^{\text{trm}}_k
    \end{bmatrix}, \label{eq:it_anticipate}
\end{align}
where:
\begin{itemize}
    \item $\tilde{p}(t) \coloneq \begin{bmatrix}
        \tilde{x}(t) & \tilde{y}(t)
    \end{bmatrix}^\top\in\mathbb{R}^2$: IT position anticipated at time $t\ge \tilde{t}_k$;
    \item $\theta^{\text{trm}}_k \coloneq (1-\psi)\theta^{\text{atk}}_k + \psi \theta^{\text{evd}}_k$: IT terminal heading computed at time $\tilde{t}_k$;
    \item $\theta^{\text{atk}}_k \coloneq \mathrm{arctan2}(y_{\text{hva}}-y_k,x_{\text{hva}}-x_k)$: IT attack heading toward HVA at time $\tilde{t}_k$;
    \item $\theta^{\text{evd}}_k \coloneq \frac{1}{n_{\text{pz}}} \sum\limits_{i\in\mathcal{N}_k} \theta^{i,\text{evd}}_k$: IT average evasion heading at time $\tilde{t}_k$; $n_{\text{pz}}$ denotes the cardinality of $\mathcal{N}_k$;
    \item $\theta^{i,\text{evd}}_k \coloneq \mathrm{arctan2}(y^i_k-y_k, x^i_k -  x_k)$: IT evasion heading to evade EI $i$ at time $\tilde{t}_k$;
    \item $\psi \in [0,1]$: Weight assigned to attack relative to evasion;
    \item $v_{\text{atk}}\in\mathbb{R}_{>0}$: IT terminal attack velocity.
\end{itemize}

Applying forward Euler integration to \eqref{eq:it_anticipate} with step size $\tilde{T}_\mathrm{s}$ yields:
\begin{align}
    \tilde{p}_{j+1|k} & = \tilde{p}_{j|k} + \tilde{T}_\mathrm{s} v_{\text{atk}} \begin{bmatrix}
        \cos \theta^{\text{trm}}_k \\
        \sin \theta^{\text{trm}}_k
    \end{bmatrix}, \label{eq:ITattack}
\end{align}
where:
\begin{itemize}
    \item $j \in \mathbb{N}_0$: predictive step;
    \item $\tilde{p}_{j|k} \coloneq \begin{bmatrix}
        \tilde{x}_{j|k} & \tilde{y}_{j|k}
    \end{bmatrix}^\top\in\mathbb{R}^2$: IT position at step $k+j$, anticipated at step $k$.
\end{itemize}
% \begin{itemize}
%     \item $\tilde{p}_{j|k} \coloneq \begin{bmatrix}
%         \tilde{x}_{j|k} & \tilde{y}_{j|k}
%     \end{bmatrix}^\top\in\mathbb{R}^2$: IT position anticipated at time $\tilde{t}_{k+j}$, given information at time $\tilde{t}_k$
%     \item $\tilde{\theta}_{j|k}\in\mathbb{R}$: IT anticipated heading at time $\tilde{t}_{k+j}$, given information at time $\tilde{t}_k$
%     \item $v_{\text{atk}}\in\mathbb{R}_{>0}$: IT terminal attack velocity
%     \item $\theta^{\text{trm}}_k \coloneq (1-\psi)\theta^{\text{atk}}_k + \psi \theta^{\text{evd}}_k \in \mathbb{R}$: IT terminal heading computed at time $\tilde{t}_k$
%     \item $\theta^{\text{atk}}_k \coloneq \mathrm{arctan2}(y_{\text{hva}}-y_k,x_{\text{hva}}-x_k) \in \mathbb{R}$: IT attack heading toward HVA, computed at time $\tilde{t}_k$
%     \item $\theta^{\text{evd}}_k \coloneq \frac{1}{n_{\text{pz}}} \sum\limits_{i\in\mathcal{N}_k} \theta_k^{i,\text{evd}} \in\mathbb{R}$: IT average evasion heading computed at $\tilde{t}_k$, based on $n_{\text{pz}}$ EIs within the IT proximity zone of radius $r_{\text{pz}} \in \mathbb{R}_{>0}$
%     \item $\mathcal{N}_k \coloneq \left\{i \in \mathcal{I}: \|p^i_k-p_k\|\le r_{\text{pz}}\right\}$: EI index set within the IT proximity zone at time $\tilde{t}_k$
%     \item $\theta_k^{i,\text{evd}} \coloneq \mathrm{arctan2}(y^i_k-y_k,x^i_k-x_k) \in \mathbb{R}$: IT evasion heading, computed at time $\tilde{t}_k$, to evade EI $i$
%     \item $\psi \in [0,1]$: Weight assigned to attack relative to evasion
% \end{itemize}
%

For all $k\ge0$ and $i\in\mathcal{N}_k$, define the discrete cost:
\begin{align}
    \mathscr{J}^i_k \coloneq &  \sum_{j=0}^{h_{\tilde{\tau}}-1} (\delta^i_{j|k})^2\tilde{T}_\mathrm{s},
\end{align}
where:
\begin{itemize}
    \item $h_{\tilde{\tau}} \in \mathbb{N}$: EI tactical horizon, with $h_{\tilde{\tau}} \tilde{T}_\mathrm{s} = \tilde{\tau}$;
    \item $\delta^i_{j|k}\coloneq\|p^i_{j|k}-\tilde{p}_{j|k}\|$: instantaneous separation distance at step $k+j$, predicted at step $k$;
    \item $p^i_{j|k} \coloneq \begin{bmatrix}
        x^i_{j|k} & y^i_{j|k}
    \end{bmatrix}^\top\in\mathbb{R}^2$: EI $i$ position at step $k+j$, predicted at step $k$.
\end{itemize}

At step $k$, for each $i\in\mathcal{N}_k$, the CN solves the optimization problem:
\begin{align}
    \min_{\{u^i_{j|k}\}_{j=0}^{h_{\tilde{\tau}}-1}} \quad &\mathscr{J}^i_k \nonumber \\
    \text{subject to} \quad & \text{\eqref{eq:sidis}--\eqref{eq:eidis}, \eqref{eq:ITattack}}; \nonumber \\
    & s^i_{j|k} \in \mathscr{S} \nonumber\\
    & u^i_{j|k} \in \mathscr{U}, \nonumber \\
    & \forall j \in \{0, \ldots, h_{\tilde{\tau}}-1\}; \nonumber \\
    & s^i_{h_{\tilde{\tau}}|k} \in \mathscr{S}^{\text{trm}}_k \nonumber \\ 
    & e^i_{h_{\tilde{\tau}}|k} \in \mathscr{E}, \label{eq:terminal_intercept}
\end{align}
where:
\begin{itemize}
    \item $s^i_{j|k} \in \mathbb{R}^4$: EI $i$ state at step $k+j$, predicted at step $k$;
    \item $u^i_{j|k} \coloneq \begin{bmatrix}
        a^i_{j|k} & \omega^i_{j|k}
    \end{bmatrix}^\top \in \mathbb{R}^2$: EI $i$ computed control, held over the interval $[\tilde{t}_{k+j},\tilde{t}_{k+j+1})$;
    \item $e^i_{h_{\tilde{\tau}}|k} \in \mathbb{R}_{\ge0}$: EI $i$ terminal remaining maneuvering energy predicted at step $k$;
    \item $\mathscr{S}^{\text{trm}}_k \coloneq \{\zeta\in\mathbb{R}^2:\|\zeta-\tilde{p}_{h_{\tilde{\tau}}|k}\|\le r_{\text{iz}}\} \times [\underline{v},\overline{v}] \times \mathbb{R}$: EI state terminal constraint set computed at step $k$.
\end{itemize}

If \eqref{eq:terminal_intercept} is infeasible, EI $i$ is removed from $\mathcal{N}_k$ and assigned to the pursuit proximity set.

\subsubsection{Pursuit Proximity Planner}\label{sec:pursuit_planner}

Let $\mathcal{P}_k\coloneq\mathcal{I}\setminus\mathcal{N}_k$ denote the pursuit proximity set, updated after the terminal planner execution.

For all $k\ge0$, define the discrete cost:
\begin{align}
    \mathscr{J}_k \coloneq & \sum_{i\in\mathcal{P}_k}\bigg\{\sum_{j=0}^{h_{\tilde{\tau}}-1}\left(\ell^i(s^i_{j|k}, u^i_{j|k}) + \eta \|\sigma^i_{j|k}\|^2\right)\tilde{T}_\mathrm{s} \nonumber \\
    & + \eta \sum_{i=1}^n \|\tilde{\sigma}^i_{h_{\tilde{\tau}}|k}\|^2\bigg\},
\end{align}
where:
\begin{itemize}
    \item $\ell^i(\cdot) \coloneq \mu_1 \left(\{a^i_{j|k}\}^2 + \kappa\{\omega^i_{j|k}\}^2\right) + \mu_2 B^i_{j|k} + \mu_3 \{\delta^i_{j|k}\}^2$: EI $i$ stage cost;
    \item $B^i_{j|k} \in \mathbb{R}_{\ge0}$: 
    EI $i$ barrier term at step $k+j$, predicted at step $k$; defined in \eqref{eq:barrier};
    \item $\eta \in \mathbb{R}_{>0}$: EI slack weight;
    \item $\sigma^i_{j|k} \coloneq \begin{bmatrix}(\sigma^{s,i}_{j|k})^\top & (\sigma^{u,i}_{j|k})^\top\end{bmatrix}^\top \in \mathbb{R}^6$: EI $i$ stage slack at step $k+j$, predicted at step $k$, with $\sigma^{s,i}_{j|k}\in\mathbb{R}^4$ and $\sigma^{u,i}_{j|k}\in\mathbb{R}^2$;
    \item $\tilde{\sigma}^i_{h_{\tilde{\tau}}|k} \in \mathbb{R}^4$: EI $i$ terminal slack predicted at step $k$.
\end{itemize}

At step $k$, the CN solves the optimization problem:
\begin{align}
    \min_{\left\{\{u^i_{j|k},\sigma^i_{j|k}\}_{j=0}^{h_{\tilde{\tau}}-1},\tilde{\sigma}^i_{h_{\tilde{\tau}}|k}\right\}_{i\in\mathcal{P}_k}} \quad & \mathscr{J}_k \nonumber \\
    \text{subject to} \quad & \text{\eqref{eq:sidis}--\eqref{eq:eidis}, \eqref{eq:ITattack}}; \nonumber \\
    & s^i_{j|k} \in \mathscr{S} \oplus \sigma^{s,i}_{j|k} \nonumber\\
    & u^i_{j|k} \in \mathscr{U} \oplus \sigma^{u,i}_{j|k} \nonumber \\ 
    & \sigma^i_{j|k} \ge 0, \nonumber \\ & \forall j \in \{0, \ldots, h_{\tilde{\tau}}-1\}; \nonumber \\
    & s^i_{h_{\tilde{\tau}}|k} \in \mathscr{S} \oplus \tilde{\sigma}^i_{h_{\tilde{\tau}}|k} \nonumber \\ 
    & e^i_{h_{\tilde{\tau}}|k} \in \mathscr{E} \nonumber \\
    & \tilde{\sigma}^i_{h_{\tilde{\tau}}|k} \ge 0. \label{eq:proximity_pursuit}
\end{align}

Constraint violations incur penalties and denote planning failures.

\subsubsection{CN Engagement Decision and Execution}

At time $\tilde{t}_k$, for each $i \in \mathcal{N}_k$, the CN solves \eqref{eq:terminal_intercept}. If feasible, EI $i$ commits to terminal engagement.
EIs that fail terminal feasibility are removed from $\mathcal{N}_k$ and assigned to $\mathcal{P}_k$.

Next, the CN solves \eqref{eq:proximity_pursuit} over the remaining EIs.
For each $i \in \mathcal{I}$, the CN transmits $u^i_{0|k}$, executed over $[\tilde{t}_k, \tilde{t}_{k+1})$ by EI $i$.

Fig.~\ref{fig:cn_decision_flow} illustrates the CN engagement decision flow.

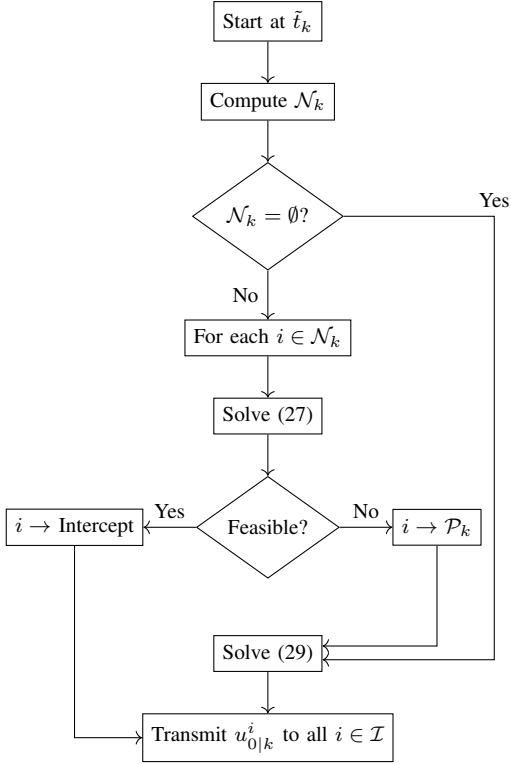
\begin{figure}[H]
    \centering
    \begin{tikzpicture}[
        node distance=0.55cm and 0.4cm,
        every node/.style={font=\footnotesize},
        decision/.style={diamond, aspect=1.4, draw, align=center},
        block/.style={rectangle, draw, align=center}
    ]
        % Nodes
        \node (start) [block] {Start at \( \tilde{t}_k \)};
        \node (candidate) [block, below=of start] {Compute \( \mathcal{N}_k \)};
        \node (nullcheck) [decision, below=of candidate] {\( \mathcal{N}_k = \emptyset \)?};

        \node (forLoop) [block, below=of nullcheck, yshift=-0.1cm] {For each \( i \in \mathcal{N}_k \)};
        \node (solveTerm) [block, below=of forLoop] {Solve \eqref{eq:terminal_intercept}};
        \node (termFeas) [decision, below=of solveTerm] {Feasible?};

        \node (commit) [block, left=of termFeas, xshift=-0.3cm] {$i\to$ Intercept};
        \node (exclude) [block, right=of termFeas, xshift=0.3cm] {\( i \to \mathcal{P}_k \)};

        \node (pursuit) [block, below=of termFeas, yshift=-0.2cm] {Solve \eqref{eq:proximity_pursuit}};
        \node (sendU) [block, below=of pursuit] {Transmit \( u^i_{0|k} \) to all \( i \in \mathcal{I} \)};

        % Arrows
        \draw[->] (start) -- (candidate);
        \draw[->] (candidate) -- (nullcheck);

        % Null set case (Yes)
        \draw[->] 
        (nullcheck.east) 
        -- ++(2.0,0) node[anchor=south] {Yes}
        -- ++(0,-5.9) 
        coordinate (aux)
        -- ++(-1.6,0) 
        -- (pursuit.east |- aux);

        % If not null
        \draw[->] (nullcheck) -- node[anchor=east] {No} (forLoop);
        \draw[->] (forLoop) -- (solveTerm);
        \draw[->] (solveTerm) -- (termFeas);
        \draw[->] (termFeas) -- node[anchor=south] {Yes} (commit);
        \draw[->] (termFeas) -- node[anchor=south] {No} (exclude);

        % Commit goes directly to transmit
        \draw[->] (commit.south) |- (sendU.west);

        % Excluded EIs go to pursuit
        \draw[->] (exclude.south) |- ([yshift=0.1cm] pursuit.east);
        \draw[->] (pursuit) -- (sendU);
    \end{tikzpicture}
    \caption{CN engagement decision flow at decision time \( \tilde{t}_k \).}
    \label{fig:cn_decision_flow}
\end{figure}

\section{IT Risk and EI Barrier Modeling} \label{sec:risk_barrier}

\subsection{IT Instantaneous Risk Density}\label{sec:it_risk}

The defender deploys $n_{\text{sd}}$ static defenses located at $p^\ell_{\text{sd}} \coloneq \begin{bmatrix} x^\ell_{\text{sd}} & y^\ell_{\text{sd}} \end {bmatrix}^\top \in\mathbb{R}^2$, with $\ell\in\mathcal{D}\coloneq\{1,\ldots,n_{\text{sd}}\}$.
The attacker possesses full intelligence on all static asset locations, acquired via AWACS.

Within the IT's proximity zone of radius $r_{\text{pz}}$, the IT models each EI as fully committed to terminal intercept along the line connecting the EI's current position to the IT's anticipated hitpoint under nominal attack motion.

The IT is modeled as fully committed to terminal attack against the HVA.
At each sampling time $\hat{t}_k$, it simulates its own trajectory for all $t\ge\hat{t}_k$, assuming constant attack velocity and a fixed heading attack $\theta^{\text{atk}}_k$ computed at $\hat{t}_k$. The dynamics are:
\begin{align}
    \dot{\hat{p}}(t) = v_{\text{atk}} \begin{bmatrix}
        \cos\theta^{\text{atk}}_k \\
        \sin\theta^{\text{atk}}_k
    \end{bmatrix}, \label{eq:IT_attack_predict}
\end{align}
where:
\begin{itemize}
    \item $\hat{p}(t)\coloneq \begin{bmatrix}
        \hat{x}(t) & \hat{y}(t)
    \end{bmatrix}^\top\in\mathbb{R}^2$: IT position anticipated at time $t\ge\hat{t}_k$.
\end{itemize}

In parallel, the IT models each proximal EI as incoming, committed to a fixed-trajectory terminal intercept.
At each sampling time $\hat{t}_k$, for each $i\in\mathcal{N}_k$, the IT simulates the trajectory of EI $i$ for all $t\ge\hat{t}_k$, assuming straight-line intercept motion at terminal velocity $v_{\text{itc}}$ with a fixed heading $\theta^{i,\text{itc}}_k$ computed at $\hat{t}_k$. The dynamics are:
\begin{align}
    \dot{\hat{p}}^i(t) = v_{\text{itc}} \begin{bmatrix}
        \cos\theta^{i,\text{itc}}_k \\
        \sin\theta^{i,\text{itc}}_k
    \end{bmatrix}, \label{eq:EI_intercept_predict}
\end{align}
where:
\begin{itemize}
    \item $v_{\text{itc}} \in \mathbb{R}_{>0}$: EI terminal intercept velocity;
    \item $\hat{p}^i(t) \coloneq \begin{bmatrix}\hat{x}^i(t) & \hat{y}^i(t)\end{bmatrix}^\top \in \mathbb{R}^2$: EI $i$ position anticipated at time $t\ge\hat{t}_k$;
    \item $\theta^{i,\text{itc}}_k\in\mathbb{R}$: EI $i$ heading toward the anticipated IT hitpoint, computed at time $\hat{t}_k$.
\end{itemize}

\begin{lemma}[Intercept Heading Feasibility]
At time $\hat{t}_k$, consider the IT position $\hat{p}(\hat{t}_k)$ and a proximal EI $i \in \mathcal{N}_k$ located at $\hat{p}^i(\hat{t}_k)$.  
Assume the IT proceeds at constant speed $v_{\text{atk}}$ along heading $\theta^{\text{atk}}_k$, and the EI engages at constant speed $v_{\text{itc}}$.

Define the line-of-sight (LOS) angle:
\begin{align}
    \theta^{\text{los}}_k \coloneq \arctan2\left( \hat{y}(\hat{t}_k) - \hat{y}^i(\hat{t}_k), \, \hat{x}(\hat{t}_k) - \hat{x}^i(\hat{t}_k) \right),
\end{align}
and the intercept feasibility scalar:
\begin{align}
    \gamma_k \coloneq \frac{v_{\text{atk}}}{v_{\text{itc}}} \sin\left( \theta^{\text{atk}}_k - \theta^{\text{los}}_k \right). \label{eq:gamma_k}
\end{align}

Then:
\begin{itemize}
    \item If $|\gamma_k| > 1 \Rightarrow$ terminal collision is kinematically infeasible; EI $i$ is discarded from $\mathcal{N}_k$.
    \item If $|\gamma_k| \le 1 \Rightarrow$ a unique terminal intercept heading is defined:
    \begin{align}
        \theta^{i,\text{itc}}_k = \theta^{\text{los}}_k + \arcsin(\gamma_k),
    \end{align}
    with corresponding time-to-intercept:
    \begin{align}
        t^{\text{itc}}_k = \frac{ \left\| \hat{p}(\hat{t}_k) - \hat{p}^i(\hat{t}_k) \right\| }{
        \left\| 
            v_{\text{itc}} \begin{bmatrix}
                \cos(\theta^{i,\text{itc}}_k) \\
                \sin(\theta^{i,\text{itc}}_k)
            \end{bmatrix}
            - 
            v_{\text{atk}} \begin{bmatrix}
                \cos(\theta^{\text{atk}}_k) \\
                \sin(\theta^{\text{atk}}_k)
            \end{bmatrix}
        \right\| }.
    \end{align}
\end{itemize}
\end{lemma}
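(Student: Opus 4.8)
The plan is to use the classical collision-triangle (relative-velocity) construction. First I would write the intercept condition as the requirement that, for some time $t>0$, the IT and EI occupy a common point,
\[
\hat{p}(\hat{t}_k) + t\,v_{\text{atk}}\begin{bmatrix}\cos\theta^{\text{atk}}_k\\ \sin\theta^{\text{atk}}_k\end{bmatrix}
= \hat{p}^i(\hat{t}_k) + t\,v_{\text{itc}}\begin{bmatrix}\cos\theta^{i,\text{itc}}_k\\ \sin\theta^{i,\text{itc}}_k\end{bmatrix},
\]
and rearrange it so that the initial relative position $\hat{p}(\hat{t}_k)-\hat{p}^i(\hat{t}_k)$ equals $t$ times the relative velocity. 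This reduces the problem to a single scalar condition on the unknown heading $\theta^{i,\text{itc}}_k$.

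Second, I would exploit that $\hat{p}(\hat{t}_k)-\hat{p}^i(\hat{t}_k)$ points, by the definition of $\theta^{\text{los}}_k$, along the LOS direction. Since $t>0$, the relative velocity must be collinear with the LOS and of the same sense. Projecting any unit heading $[\cos\alpha,\sin\alpha]^\top$ onto the LOS-normal $[-\sin\theta^{\text{los}}_k,\cos\theta^{\text{los}}_k]^\top$ returns $\sin(\alpha-\theta^{\text{los}}_k)$, so the vanishing of the LOS-orthogonal component of the relative velocity reads $v_{\text{itc}}\sin(\theta^{i,\text{itc}}_k-\theta^{\text{los}}_k)=v_{\text{atk}}\sin(\theta^{\text{atk}}_k-\theta^{\text{los}}_k)$, that is, $\sin(\theta^{i,\text{itc}}_k-\theta^{\text{los}}_k)=\gamma_k$ with $\gamma_k$ as in~\eqref{eq:gamma_k}.

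Third, feasibility is immediate: a real heading exists if and only if this sine equation is solvable, i.e. if and only if $|\gamma_k|\le1$, which settles the infeasible case $|\gamma_k|>1$. For uniqueness, the equation admits the two candidates $\theta^{\text{los}}_k+\arcsin\gamma_k$ and $\theta^{\text{los}}_k+\pi-\arcsin\gamma_k$; I would retain only the branch whose along-LOS (closing) component is nonnegative, namely the principal one, for which $\cos(\theta^{i,\text{itc}}_k-\theta^{\text{los}}_k)=\sqrt{1-\gamma_k^2}\ge0$, whereas the supplementary branch recedes from the target and is rejected. This yields the stated $\theta^{i,\text{itc}}_k=\theta^{\text{los}}_k+\arcsin(\gamma_k)$.

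Finally, once the LOS-orthogonal component is eliminated the relative velocity is collinear with the range vector, so taking norms in the rearranged intercept condition gives
\[
\bigl\|\hat{p}(\hat{t}_k)-\hat{p}^i(\hat{t}_k)\bigr\|
= t\,\Bigl\| v_{\text{itc}}\begin{bmatrix}\cos\theta^{i,\text{itc}}_k\\ \sin\theta^{i,\text{itc}}_k\end{bmatrix} - v_{\text{atk}}\begin{bmatrix}\cos\theta^{\text{atk}}_k\\ \sin\theta^{\text{atk}}_k\end{bmatrix}\Bigr\|,
\]
which solves for $t^{\text{itc}}_k$ exactly as stated. I expect the uniqueness step to be the main obstacle: one must argue that precisely one branch produces a genuine forward-time intercept, which reduces to showing that the closing rate $v_{\text{itc}}\sqrt{1-\gamma_k^2}-v_{\text{atk}}\cos(\theta^{\text{atk}}_k-\theta^{\text{los}}_k)$ is positive. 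This holds automatically when $v_{\text{itc}}>v_{\text{atk}}$, and otherwise I would record the geometric proviso that the EI is actually closing as a standing assumption of the terminal planner.
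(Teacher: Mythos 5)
Your proposal is correct and follows the same route as the paper's own proof: the collision-triangle position equality, reduction to collinearity of the relative velocity with the LOS, projection onto the LOS-normal to get $v_{\text{itc}}\sin(\theta^{i,\text{itc}}_k-\theta^{\text{los}}_k)=v_{\text{atk}}\sin(\theta^{\text{atk}}_k-\theta^{\text{los}}_k)$, hence $|\gamma_k|\le1$ as the feasibility criterion, and the norm ratio $t^{\text{itc}}_k=\|r_k\|/\|v^{\text{rel}}_k\|$.

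Where you go beyond the paper, the extra care is warranted rather than redundant. The paper's proof simply declares the heading ``uniquely defined'' as $\theta^{\text{los}}_k+\arcsin(\gamma_k)$ without ever mentioning the supplementary solution $\theta^{\text{los}}_k+\pi-\arcsin(\gamma_k)$ of the sine equation; your branch selection via the sign of the along-LOS closing component, $\cos(\theta^{i,\text{itc}}_k-\theta^{\text{los}}_k)=\sqrt{1-\gamma_k^2}\ge0$, is exactly the missing argument. Likewise, the paper never checks $t^{\text{itc}}_k>0$, i.e., that the relative velocity is collinear with the range vector \emph{in the same sense}; in fact its collinearity relation $r_k+t^{\text{itc}}_k v^{\text{rel}}_k=0$ carries a sign slip (with $r_k\coloneq\hat{p}(\hat{t}_k)-\hat{p}^i(\hat{t}_k)$ and $v^{\text{rel}}_k\coloneq v_{\text{itc}}u^{\text{itc}}_k-v_{\text{atk}}u^{\text{atk}}_k$, the position equality gives $r_k=t^{\text{itc}}_k v^{\text{rel}}_k$), which obscures precisely the orientation issue you isolate. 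Your final caveat is also substantively right: when $v_{\text{itc}}\le v_{\text{atk}}$ the condition $|\gamma_k|\le1$ is necessary but not sufficient --- a tail chase with $\theta^{\text{atk}}_k=\theta^{\text{los}}_k$ gives $\gamma_k=0$ yet closing rate $v_{\text{itc}}-v_{\text{atk}}\le0$, so no forward-time intercept exists --- hence the lemma implicitly assumes $v_{\text{itc}}>v_{\text{atk}}$ (or a positive-closing-rate proviso), an assumption the paper's proof does not record. In short: same method, but your write-up closes two genuine gaps in the paper's argument.
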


\begin{proof}
Let $r_k \coloneq \hat{p}(\hat{t}_k) - \hat{p}^i(\hat{t}_k) \in \mathbb{R}^2$ denote the LOS vector.
Let $u^{\text{atk}}_k \coloneq \begin{bmatrix}
    \cos \theta^{\text{atk}}_k \\
    \sin \theta^{\text{atk}}_k
\end{bmatrix}\in \mathbb{R}^2$ and $u^{\text{itc}}_k \coloneq \begin{bmatrix}
    \cos \theta^{i,\text{itc}}_k \\
    \sin \theta^{i,\text{itc}}_k
\end{bmatrix} \in \mathbb{R}^2$ denote unit vectors aligned with headings $\theta^{\text{atk}}_k$ and $\theta^{i,\text{itc}}_k$, respectively.

\begin{enumerate}
\item \textbf{Collision Geometry}:
Terminal intercept requires positional coincidence at some future time $\hat{t}_k + t^{\text{itc}}_k$.  
Substituting constant-velocity dynamics into the position equality:
\[
\hat{p}(\hat{t}_k + t^{\text{itc}}_k) = \hat{p}^i(\hat{t}_k + t^{\text{itc}}_k),
\]
yields:
\[
r_k + t^{\text{itc}}_k v^{\text{rel}}_k = 0,
\]
where:
\begin{align}
v^{\text{rel}}_k \coloneq v_{\text{itc}} u^{\text{itc}}_k - v_{\text{atk}} u^{\text{atk}}_k. \label{eq:vrel}
\end{align}
This forces collinearity between $r_k$ and $v^{\text{rel}}_k$.

\item \textbf{Feasibility Condition}:
Let $u^\perp_k \coloneq \begin{bmatrix} -\sin(\theta^{\text{los}}_k) \\ \cos(\theta^{\text{los}}_k) \end{bmatrix}$ be the unit vector orthogonal to the LOS.
Project $v^{\text{rel}}_k$ onto $u^\perp_k$ and enforce zero component:
\[
v^{\text{rel}}_k \cdot u^\perp_k = 0.
\]
Expand using \eqref{eq:vrel}:
\[
v_{\text{itc}} \sin\left( \theta^{i,\text{itc}}_k - \theta^{\text{los}}_k \right)
= v_{\text{atk}} \sin\left( \theta^{\text{atk}}_k - \theta^{\text{los}}_k \right).
\]
Using \eqref{eq:gamma_k}, collision feasibility requires $|\gamma_k| \le 1$.

\item \textbf{Solution Construction}:
If feasible, the intercept heading is uniquely defined as:
\[
\theta^{i,\text{itc}}_k = \theta^{\text{los}}_k + \arcsin(\gamma_k).
\]
The time-to-intercept is defined by the norm ratio:
\[
t^{\text{itc}}_k = \frac{ \|r_k\| }{ \|v^{\text{rel}}_k\| }.
\]
\end{enumerate}
\end{proof}

Define the IT instantaneous risk density:
\begin{align}
    \rho(t) \coloneq \rho_{\text{sd}}(p(t)) + \rho_{\text{ei}}(p(t),\hat{p}^i(t)), \label{eq:rho_def}
\end{align}
where:
\begin{itemize}
    \item $\rho_{\text{sd}}(\zeta) \coloneq w_{\text{sd}} \sum\limits_{\ell\in\mathcal{D}} \exp\left(-\dfrac{\|\zeta-p^\ell_{\text{sd}}\|^2}{2 \sigma^2_{\text{sd}}}\right)$: cost density from static defense exposure;
    \item $w_{\text{sd}} \in \mathbb{R}_{>0}$: static defense lethality scaling factor;
    \item $\sigma_{\text{sd}} \in \mathbb{R}$: spatial spread of static defense field;
    \item $\rho_{\text{ei}}(\zeta, \zeta^i) \coloneq w_{\text{ei}} \sum\limits_{i\in\mathcal{N}_k} \exp\left(-\dfrac{\|\zeta-\zeta^i\|^2}{2 \sigma^2_{\text{ei}}}\right)$: cost density from intercept-capable EI proximity; $\mathcal{N}_k$ pruned post heading feasibility check;
    \item $w_{\text{ei}} \in \mathbb{R}_{>0}$: scaling factor for EI threat intensity;
    \item $\sigma_{\text{ei}} \in \mathbb{R}$: spatial softness of EI threat field.
\end{itemize}

Applying forward Euler integration to \eqref{eq:EI_intercept_predict} with step size $\hat{T}_\mathrm{s}$ yields:
\begin{align}
    \hat{p}^i_{j+1|k} &= \hat{p}^i_{j|k} + \hat{T}_\mathrm{s} v_{\text{atk}}\begin{bmatrix}
        \cos\theta^{i,\text{itc}}_k \\
        \sin\theta^{i,\text{itc}}_k
    \end{bmatrix},
\end{align}
where:
\begin{itemize}
    \item $j \in \mathbb{N}_0$: predictive step;
    \item $\hat{p}^i_{j|k} \coloneq \begin{bmatrix}
        \hat{x}^i_{j|k} & \hat{y}^i_{j|k}
    \end{bmatrix}^\top\in\mathbb{R}^2$: EI $i$ position at step $k+j$, anticipated at step $k$.
\end{itemize}

Define the discrete instantaneous risk density:
\begin{align}
    \rho_{j|k} \coloneq & w_{\text{sd}}\sum_{\ell\in\mathcal{D}} \exp\left(-\frac{\|p_{j|k} - p_{\text{sd}}^\ell\|^2}{2\sigma_{\text{sd}}^2}\right) \nonumber \\
    & + w_{\text{ei}} \sum_{i\in\mathcal{N}_k}\exp\left(-\frac{\|p_{j|k}-\hat{p}^i_{j|k}\|^2}{2\sigma_{\text{ei}}^2}\right). \label{eq:rho_dis}
\end{align}

\subsection{EI Barrier}\label{sec:ei_barrier}

Each pursuing EI $i\in\mathcal{P}_k$ operates under two barrier costs:
\begin{itemize}
    \item Patrol Adherence Cost (PAC): enforces spatial proximity to its assigned patrol center $p^i_\text{pac} \in \mathbb{R}^2$;
    \item HVA Tethering Cost (HTC): imposes a soft constraint on all EIs to maintain proximity to the HVA.
\end{itemize}
This dual-layered structure prevents EIs from being tactically displaced or lured by IT, preserving coverage near HVA while enabling distributed patrol operations.

Define PAC at prediction index $j$:
\begin{align}
    B^{i,\text{pac}}_{j|k} \coloneq 
    \begin{cases}
        0, & \text{if $\delta^{i,\text{pac}}_{j|k} < r_{\text{pac}}$} \\
        w_{\text{pac}} (\delta^{i,\text{pac}}_{j|k} - r_{\text{pac}})^2, & \text{otherwise},
    \end{cases}
\end{align}
where:
\begin{itemize}
    \item $w_{\text{pac}}\in\mathbb{R}_{>0}$: PAC weight;
    \item $r_{\text{pac}} \in \mathbb{R}_{>0}$: patrol zone radius;
    \item $\delta^{i,\text{pac}}_{j|k} \coloneq \|p^i_{j|k} - p^i_\text{pac}\|$: EI $i$ distance to assigned patrol center.
\end{itemize}

Define HTC at prediction index $j$:
\begin{align}
    B^{i,\text{htc}}_{j|k} \coloneq 
    \begin{cases}
        0, & \text{if $\delta^{i,\text{htc}}_{j|k} < r_{\text{htc}}$} \\
        w_{\text{htc}} (\delta^{i,\text{htc}}_{j|k} - r_{\text{htc}})^2, & \text{otherwise},
    \end{cases}
\end{align}
where:
\begin{itemize}
    \item $w_{\text{htc}}\in\mathbb{R}_{>0}$: HTC weight;
    \item $r_{\text{htc}} \in \mathbb{R}_{>0}$: HVA tether radius;
    \item $\delta^{i,\text{htc}}_{j|k} \coloneq \|p^i_{j|k} - p_\text{hva}\|$: EI $i$ distance to HVA.
\end{itemize}

The total barrier cost is computed by:
\begin{align}
    B^i_{j|k} = B^{i,\text{pac}}_{j|k} + B^{i,\text{htc}}_{j|k}. \label{eq:barrier}
\end{align}

\section{Conclusion}

This paper presented a predictive framework for adversarial energy-depletion defense against a maneuverable inbound threat. Both attacker and defender solve receding-horizon problems; the attacker minimizes energy while evading static defenses and interceptors, while defenders coordinate through soft constraints and conditional intercept logic. Unlike capture-focused strategies, this framework imposes exhaustion without explicit opponent-energy terms. Engagement logic is centralized, cost-continuous, and implementable using standard optimization tools. Future work will evaluate effectiveness through Monte Carlo simulations across force structures and engagement scenarios.

% \section*{Acknowledgment}

\bibliographystyle{IEEEtran.bst}
\bibliography{bibliography.bib}

\begin{thebibliography}{10}
\providecommand{\url}[1]{#1}
\csname url@rmstyle\endcsname
\providecommand{\newblock}{\relax}
\providecommand{\bibinfo}[2]{#2}
\providecommand\BIBentrySTDinterwordspacing{\spaceskip=0pt\relax}
\providecommand\BIBentryALTinterwordstretchfactor{4}
\providecommand\BIBentryALTinterwordspacing{\spaceskip=\fontdimen2\font plus
\BIBentryALTinterwordstretchfactor\fontdimen3\font minus
  \fontdimen4\font\relax}
\providecommand\BIBforeignlanguage[2]{{%
\expandafter\ifx\csname l@#1\endcsname\relax
\typeout{** WARNING: IEEEtran.bst: No hyphenation pattern has been}%
\typeout{** loaded for the language `#1'. Using the pattern for}%
\typeout{** the default language instead.}%
\else
\language=\csname l@#1\endcsname
\fi
#2}}

\bibitem{breakwell1975pursuit}
J.~V. Breakwell, ``Pursuit of a faster evader,'' in \emph{The Theory and
  Application of Differential Games: Proceedings of the NATO Advanced Study
  Institute held at the University of Warwick, Coventry, England, 27 August--6
  September, 1974}.\hskip 1em plus 0.5em minus 0.4em\relax Springer, 1975, pp.
  243--256.

\bibitem{eloy2017active}
E.~Garcia, D.~Casbeer, and M.~Pachter, ``Active target defense differential
  game with a fast defender,'' \emph{IET Control Theory and Applications},
  vol.~17, no.~11, p. 2985, 2017.

\bibitem{9147205}
I.~E. Weintraub, M.~Pachter, and E.~Garcia, ``An introduction to
  pursuit-evasion differential games,'' in \emph{2020 American Control
  Conference (ACC)}, 2020, pp. 1049--1066.

\bibitem{garcia2021complete}
E.~Garcia, D.~W. Casbeer, and M.~Pachter, ``The complete differential game of
  active target defense,'' \emph{Journal of Optimization Theory and
  Applications}, vol. 191, no.~2, pp. 675--699, 2021.

\bibitem{1470179}
J.~Eklund, J.~Sprinkle, and S.~Sastry, ``Implementing and testing a nonlinear
  model predictive tracking controller for aerial pursuit/evasion games on a
  fixed wing aircraft,'' in \emph{Proceedings of the 2005, American Control
  Conference, 2005.}, 2005, pp. 1509--1514 vol. 3.

\bibitem{doi:10.2514/1.50572}
\BIBentryALTinterwordspacing
A.~Ratnoo and T.~Shima, ``Line-of-sight interceptor guidance for defending an
  aircraft,'' \emph{Journal of Guidance, Control, and Dynamics}, vol.~34,
  no.~2, pp. 522--532, 2011. [Online]. Available:
  \url{https://doi.org/10.2514/1.50572}
\BIBentrySTDinterwordspacing

\bibitem{5756676}
J.~Eklund, J.~Sprinkle, and S.~Sastry, ``Switched and symmetric pursuit/evasion
  games using online model predictive control with application to autonomous
  aircraft,'' \emph{IEEE Transactions on Control Systems Technology}, vol.~20,
  no.~3, pp. 604--620, 2012.

\bibitem{doi:10.2514/6.2015-0337}
\BIBentryALTinterwordspacing
E.~Garcia, D.~Casbeer, K.~D. Pham, and M.~Pachter, \emph{Cooperative Aircraft
  Defense from an Attacking Missile using Proportional Navigation}.\hskip 1em
  plus 0.5em minus 0.4em\relax AIAA Guidance, Navigation, and Control
  Conference, 2015, p. 0337. [Online]. Available:
  \url{https://arc.aiaa.org/doi/abs/10.2514/6.2015-0337}
\BIBentrySTDinterwordspacing

\bibitem{7139438}
A.~Pierson and M.~Schwager, ``Bio-inspired non-cooperative multi-robot
  herding,'' in \emph{2015 IEEE International Conference on Robotics and
  Automation (ICRA)}, 2015, pp. 1843--1849.

\bibitem{bryson2018applied}
A.~E. Bryson, \emph{Applied optimal control: optimization, estimation and
  control}.\hskip 1em plus 0.5em minus 0.4em\relax Routledge, 2018.

\bibitem{casbeer2018target}
D.~W. Casbeer, E.~Garcia, and M.~Pachter, ``The target differential game with
  two defenders,'' \emph{Journal of Intelligent \& Robotic Systems}, vol.~89,
  no.~1, pp. 87--106, 2018.

\end{thebibliography}

\end{document}